\documentclass{article}
\usepackage{amsmath,amssymb,amsthm}
\usepackage[noadjust]{cite}
\usepackage{comment}

\usepackage{tikz}
\usetikzlibrary{decorations.pathreplacing}
\usetikzlibrary{calc}

\newcommand{\Erdos}{Erd\H{o}s}
\newcommand{\Althofer}{Alth\"{o}fer}
\newcommand{\Szabo}{Szab\'{o}}

\newcommand{\Fp}{\mathbb{F}_p}
\newcommand{\Bk}{\mathcal{B}}
\newcommand{\Lift}{\mathcal{L}}
\newcommand{\dist}{\operatorname{dist}}
\newcommand{\edge}[2]{\{#1,#2\}}

\newcommand{\ex}{\texttt{ex}}
\newcommand{\linelength}{\texttt{line-length}}

\usepackage{todonotes}

\title{Unconditional Lower Bounds for Degree Fault Tolerant Spanners}
\date{}

\usepackage{geometry}
\usepackage{hyperref}
\newtheorem{theorem}{Theorem}
\theoremstyle{definition}
\newtheorem{definition}[theorem]{Definition}
\newtheorem{lemma}[theorem]{Lemma}

\author{Greg Bodwin and Aleksey Lopez\\University of Michigan EECS\\ \texttt{\{bodwin, lopezag\}@umich.edu}}

\begin{document}

\maketitle

\thispagestyle{empty}
\setcounter{page}{0}

\begin{abstract}
We study multiplicative graph spanners in the \emph{$f$-degree fault tolerant ($f$-DFT)} model, in which the spanner must approximately preserve distances even after any subset of edges of maximum degree $f$ temporarily ``fails'' and is removed from the graph.
We prove that there are $n$-node lower bound graphs for which any $f$-DFT $(2k-1)$-stretch spanner $H$ must have size
$$|E(H)| \ge \Omega\left( f^{1-1/k} n^{1+1/k}\right).$$
This matches a lower bound that was previously only known to hold conditionally, under the 1963 \emph{girth conjecture} of \Erdos{}.
It also matches the current upper bounds, up to a factor of $\texttt{exp}(k)$.
Our proof is an analysis of the so-called \emph{Wenger graphs} (J.\ Comb.\ Theory 1991), via their recent reinterpretation by \Szabo{} and by Conlon (Am.\ Math.\ Monthly 2021).
\end{abstract}

\clearpage

\section{Introduction}

\subsection{Spanners and the Girth Conjecture}

A \emph{multiplicative spanner} is a primitive in graph theory and algorithms, which sparsifies an input graph while approximately preserving its shortest path distances.
Spanners were abstracted in the late '80s after being implicitly studied in networking, distributed algorithms, etc (see survey \cite{ABSHJKS20}).

\begin{definition} [Spanners \cite{PS89}]
Given a graph $G = (V, E, w)$, a subgraph $H = (V, E', w)$ is a $k$-spanner of $G$ if for all nodes $s, t$, we have $\dist_H(s, t) \le k \cdot \dist_G(s, t)$.
\end{definition}

Spanners and their variants are often studied from the \emph{extremal} perspective, where the goal is to prove that one can always construct a spanner with a good tradeoff between size and error.
The following classic theorem of \Althofer{}, Das, Dobkin, Joseph, and Soares \cite{ADDJS93} conditionally settles the extremal tradeoff available for multiplicative spanners:

\begin{theorem} [\cite{ADDJS93}] \label{thm:introspan}
For all positive integers $n, k$, every $n$-node graph has a $(2k-1)$-spanner on $O(n^{1+1/k})$ edges.
Moreover, assuming the girth conjecture \cite{girth}, for all positive integers $k$ there are families of $n$-node graphs on which this size bound cannot be improved to $o(n^{1+1/k})$.
\end{theorem}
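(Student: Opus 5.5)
The plan is to prove the two halves of Theorem~\ref{thm:introspan} separately: the upper bound by the classical greedy construction, and the lower bound by reduction to high-girth graphs.

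\emph{Upper bound.} Sort the edges of $G$ by nondecreasing weight and start with $H$ empty. Scan the edges in order, and add $(u,v)$ to $H$ exactly when $\dist_H(u,v) > (2k-1)\, w(u,v)$ at the moment it is considered.

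Correctness comes first. Every rejected edge $(u,v)$ already has a path in $H$ of length at most $(2k-1)w(u,v)$. For arbitrary $s,t$, replace each edge of a shortest $s$--$t$ path in $G$ by such a path. This gives $\dist_H(s,t)\le(2k-1)\dist_G(s,t)$.

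Next, I would show that $H$ has girth greater than $2k$. Suppose $H$ contained a cycle $C$ with at most $2k$ edges, and let $(u,v)$ be the last edge of $C$ added. When it was added, the other at most $2k-1$ edges of $C$ were already in $H$, and each has weight at most $w(u,v)$. So they form a $u$--$v$ path of length at most $(2k-1)w(u,v)$. This contradicts the insertion rule.

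Finally, I would invoke the standard Moore-type bound: any $n$-node graph with girth above $2k$ has $O(n^{1+1/k})$ edges. To prove it, first delete vertices of degree below $d/2$, where $d=2|E|/n$ is the average degree. This can only leave a subgraph of minimum degree at least $d/2$, and that subgraph is nonempty. Then run a BFS of depth $k$ from any vertex. Girth $>2k$ makes the BFS tree have no collisions, so it contains at least $(d/2-1)^k$ distinct vertices. This forces $(d/2-1)^k\le n$, hence $d=O(n^{1/k})$.

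\emph{Lower bound.} Assume the girth conjecture: for each $k$ there are $n$-node graphs $G$ with girth greater than $2k$ (indeed at least $2k+2$) and $\Omega(n^{1+1/k})$ edges. Take $G$ unweighted and let $H$ be any $(2k-1)$-spanner of $G$. Suppose some edge $(u,v)$ of $G$ is missing from $H$. Then $\dist_H(u,v)\le 2k-1$, so $H$ contains a $u$--$v$ path of at most $2k-1$ edges avoiding $(u,v)$. Together with $(u,v)$ this gives a cycle in $G$ of length at most $2k$, a contradiction. Thus $H=G$, and $H$ has $\Omega(n^{1+1/k})$ edges. The conjecture gives such graphs for infinitely many $n$; if needed, padding with isolated vertices extends the bound to all $n$ up to constants.

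The main obstacle is the Moore bound step. There I need to be careful that the min-degree reduction preserves girth, which it does since it passes to a subgraph, and that the BFS counting is exact. Everything else is direct.
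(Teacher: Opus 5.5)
The paper gives no proof of this statement (it is quoted as background from \cite{ADDJS93}), and your argument is the standard one for it and is correct: the greedy spanner has girth $>2k$ and hence $O(n^{1+1/k})$ edges by the Moore bound, while a graph of girth $>2k$ is the only $(2k-1)$-spanner of itself. That second half is the fault-free analogue of the paper's Theorem~\ref{thm:onedft}, with $F_e=\emptyset$ sufficing. The one step to state more precisely is the pruning: you must delete vertices of current degree below $d/2$ repeatedly, and the remaining graph is nonempty because each deletion removes fewer than $d/2$ edges, so fewer than $nd/2=|E|$ edges are removed in total.
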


The \emph{girth conjecture} referenced in this theorem, attributed to \Erdos{} \cite{girth}, says the following.
Let us denote by $\ex(n, C_{\le 2k})$ the maximum possible number of edges in an $n$-node graph that does not have any cycles of length $\le 2k$ as subgraphs.
There is a simple folklore counting argument, called the \emph{Moore bound}, that implies
$$\ex(n, C_{\le 2k}) \le O\left( n^{1+1/k}\right).$$
The girth conjecture posits that this bound is asymptotically best possible for all $k$.
The girth conjecture is an old and very difficult problem in extremal graph theory, likely beyond the reach of current techniques.
It has been confirmed only for $k \in \{1, 2, 3, 5\}$ \cite{Wenger91, Tits59}, with no new cases settled since the 1950s.
In fact, the community has more recently started to regard it with some skepticism.
For example, in an article from 2021, Conlon \cite{Conlon21} writes:
\begin{center}
\emph{``It is now quite commonplace to believe that the true value of certain extremal numbers lie below what the classical arguments give. We suspect that this should already be the case for $C_8$, that is, that $\ex(n, C_8) = o(n^{5/4})$.''}
\end{center}
Here $\ex(n, C_8)$ is the maximum possible number of edges in an $n$-node graph with no $C_8$ subgraph.
It follows from the definitions that $\ex(n, C_8) \ge \ex(n, C_{\le 8})$, and so the suggested bound of $o(n^{5/4})$ would imply that the girth conjecture is false already for $k=4$.
Since this seems difficult to prove, Conlon goes on to give a more attainable conjecture, very roughly stating that the techniques from incidence geometry used to confirm the girth conjecture for $k \in \{1, 2, 3, 5\}$ cannot be extended to other values of $k$.
This conjecture is related to the classical Feit-Higman Theorem \cite{FH64}, which is sometimes interpreted as further evidence against the girth conjecture.\footnote{The constructions implying the girth conjecture for $k \in \{1, 2, 3, 5\}$ arise from highly symmetric incidence structures known as \emph{finite generalized polygons}.  The Feit-Higman Theorem constrains the existence of finite generalized polygons, and in particular, it implies that finite generalized polygons cannot be used to prove the girth conjecture for any further values of $k$.}
For further technical discussion and references on the girth conjecture and its surrounding literature, we recommend the excellent set of lecture notes by \Szabo{} \cite{Szabo18}.


\subsection{Fault Tolerant Spanners}

Spanners are often used in distributed systems and networking, where there is a possibility of nodes or edges in the input graph $G$ temporarily \emph{failing} and becoming unusable while they await repair.
A desirable feature in these application domains is for a spanner $H$ to retain its distance approximation guarantees even after any ``reasonable'' failure event occurs.

The first formal model of fault tolerant spanners was introduced by Levcopoulos, Narasimhan, and Smid \cite{LNS98}.
They defined $H$ to be an $f$-vertex (edge) fault tolerant $k$-spanner of $G$ if, for any set $F$ of $|F| \le f$ vertices (edges), the subgraph $H \setminus F$ is still a $k$-spanner of $G \setminus F$.
The initial results in this model were for geometric graphs, which has remained an area of intensive focus for the community \cite{LNS98,Lukovszki99,CZ04,NS07,BHO20,Solomon14,BDMS13,CLN15,CLNS15,LST23}.
Results for general graphs were obtained later by Chechik, Langberg, Peleg, and Roditty~\cite{CLPR10}, which has inspired a similarly long line of followup work~\cite{DK11podc,BDPV18,BP19,DR20,BDR21,BDR22,Parter22}.
By now, this model of fault tolerance is mostly well understood.
We have optimal existential bounds \cite{BDPV18, BP19} and near-optimal algorithms \cite{Parter22} for the setting of vertex faults.
We also have optimal existential bounds for edge faults for infinitely many values of spanner error $k$ \cite{BDR21, BDPV18}, although some knowledge gaps remain.
As usual, the lower bounds in both settings are conditioned on the girth conjecture.\footnote{In the setting of vertex faults (but not edge faults), we remark that the upper bounds and lower bounds both work \emph{by reduction} to $\ex(n, C_{\le 2k})$, and hence proving an unconditional tight lower bound would be equivalent to proving the girth conjecture itself \cite{BP19, BDPV18}.  In the setting of edge faults, for higher fault tolerance parameters $f$, the lower bounds can also be proved by conditioning on a certain bipartite extension of the girth conjecture; see \cite{BDR22}, Section 2.1 for discussion.}

More recently, the community has turned toward new and more expressive models of fault tolerance, that strengthen the original model without harming the spanner size too much.
One such model is \emph{color-fault tolerant spanners}, a compelling model of correlated failure introduced recently by Petruschka, Sapir, and Tzalik \cite{PST25} (see also \cite{PST25, PPST25, PST24b, PP25}).
Optimal or near-optimal bounds are known for this setting as well, again with the lower bounds conditioned on the girth conjecture.\footnote{Again, the upper and lower bounds in the color-fault model work by reduction to $\ex(n, C_{\le 2k})$, and so proving an unconditional lower bound is equivalent to proving the girth conjecture \cite{PST25}.}
Another more expressive fault tolerance model, which will be a focus of this work, is \emph{degree fault tolerant spanners}:

\begin{figure}[t]
\begin{center}
\begin{tikzpicture}[scale=1.2, every node/.style={font=\small}]
  \fill (0,1) circle (2pt);
  \fill (4,1) circle (2pt);

  \fill (2,2) circle (2pt);
  \fill (2,1) circle (2pt);
  \fill (2,0) circle (2pt);

  \draw (0,1) -- node [midway, sloped, red] {\Huge \bf $\times$} (2,2) -- (4,1);
  \draw (0,1) -- (2,1) -- (4,1);
  \draw (0,1) -- (2,0) -- node [midway, sloped, red] {\Huge \bf $\times$} (4,1);

  \draw (2,2) -- (2,1);
  \draw (2,1) -- (2,0);

\begin{scope}[shift={(6, 0)}]
  \fill (0,1) circle (2pt);
  \fill (4,1) circle (2pt);

  \fill (2,2) circle (2pt);
  \fill (2,1) circle (2pt);
  \fill (2,0) circle (2pt);

  \draw (0,1) -- node [midway, sloped, red] {\Huge \bf $\times$} (2,2);
  \draw (0,1) -- (2,1) -- (4,1);
  \draw (0,1) -- (2,0) -- node [midway, sloped, red] {\Huge \bf $\times$} (4,1);

  \draw (2,2) -- (2,1);
  \draw (2,1) -- (2,0);
  \end{scope}
\end{tikzpicture}
\end{center}
\caption{An input graph $G$ (left) and a $1$-DFT $3$-spanner $H$ (right).  Note that after any degree-1 fault set is removed from both $G$ and $H$ (such as the pair of edges marked with red X's), the remaining part of $H$ is still a $3$-spanner of the remaining part of $G$.}
\end{figure}

\begin{definition} [Degree Fault Tolerant Spanners \cite{BHP24}]
Given a graph $G = (V, E, w)$, a subgraph $H = (V, E', w)$ is an $f$-degree fault tolerant ($f$-DFT) $k$-spanner of $G$ if, for any edge subset $F \subseteq E$ such that $F$ contains at most $f$ edges incident on each node, we have
$$\dist_{H \setminus F}(s, t) \le k\cdot \dist_{G \setminus F}(s, t) \qquad \text{for all nodes } s, t.$$
\end{definition}

An $f$-DFT spanner can tolerate many more edge failures at once than in the original model of fault tolerance; for example, it can handle $O(nf)$ edge failures, so long as they are spread throughout the graph such that at most $f$ failing edges are incident to any particular node.
Despite this possibility of massive failure, Bodwin, Haeupler, and Parter \cite{BHP24} showed that fairly reasonable existential size bounds are available, not much worse than the ones known for the original model:
\begin{theorem} [\cite{BHP24}]
For all positive integers $k$, every $n$-node graph $G$ has an $f$-DFT $(2k-1)$-spanner $H$ on
$$|E(H)| \le O\left( \exp(k) \cdot f^{1-1/k} \cdot n^{1+1/k}\right)$$
edges.
Moreover, assuming the girth conjecture, there is a family of $n$-node input graphs $G$ for which any $f$-DFT $(2k-1)$-spanner $H$ must have
$$|E(H)| \ge \Omega\left( f^{1-1/k} \cdot n^{1+1/k}\right).$$
\end{theorem}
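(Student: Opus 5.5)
The statement has two parts, an existential upper bound and a conditional lower bound, and I will prove them separately. The lower bound is a blow-up construction. The upper bound is a greedy construction followed by a vertex-sampling argument, which reduces the problem to the case of $O(1)$ faults per vertex and then uses the Moore bound.

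\emph{Lower bound.} Assuming the girth conjecture, I take a graph $\Gamma$ on $n/f$ nodes with girth $>2k$ and $\Omega((n/f)^{1+1/k})$ edges.
\begin{itemize}
\item Build $G$ by replacing each node $a$ of $\Gamma$ with $f$ copies $a_1,\dots,a_f$, and each edge $\edge{a}{b}$ with the complete bipartite graph $K_{f,f}$ between the copies (a ``bundle''), all edges of weight $1$.
\item Then $|V(G)| = n$ and $|E(G)| = f^2 \cdot \Omega((n/f)^{1+1/k}) = \Omega(f^{1-1/k} n^{1+1/k})$.
\item I claim every $f$-DFT $(2k-1)$-spanner $H$ of $G$ must equal $G$. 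Suppose $\edge{a_i}{b_j} \notin E(H)$. Let $F$ be the whole $a$--$b$ bundle minus the edge $\edge{a_i}{b_j}$.
\item Each node loses at most $f-1$ edges, so $F$ is a legal fault set, and $\dist_{G\setminus F}(a_i,b_j)=1$.
\item Any $a_i \leadsto b_j$ path in $H \setminus F$ avoids the $a$--$b$ bundle entirely. Projecting it to $\Gamma$ gives a walk from $a$ to $b$ in $\Gamma - \edge{a}{b}$.
\item If that path has length $\le 2k-1$, the walk together with the edge $\edge{a}{b}$ contains a cycle of length $\le 2k$ in $\Gamma$, contradicting the girth of $\Gamma$.
\end{itemize}
Hence $H$ contains every edge of $G$, which gives the lower bound.

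\emph{Upper bound, Step 1 (greedy).} Process the edges of $G$ in nondecreasing weight order. Add $\edge{u}{v}$ to $H$ if there is a degree-$f$ fault set $F$ with $\dist_{H\setminus F}(u,v) > (2k-1)w(u,v)$, where $H$ is the current spanner. The standard greedy argument shows that $H$ is an $f$-DFT $(2k-1)$-spanner: a failing pair can be replaced edge-by-edge along a shortest path in $G\setminus F$. Record for each added edge $e$ its witness set $F_e$. Then every cycle of length $\le 2k$ in $H$ whose heaviest edge is $e$ contains an edge of $F_e$, since such a cycle would otherwise give a short detour for $e$. From here on I only use this structural property of $H$.

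\emph{Upper bound, Step 2 (sampling).} Sample each vertex independently with probability $q = 1/(cf)$ and let $S$ be the sampled set. In $H[S]$, keep an edge $e=\edge{u}{v}$ only if both endpoints are sampled and no edge of $F_e$ with an endpoint among the vertices of $e$'s potential short cycles is present in $H[S]$.
\begin{itemize}
\item The survival condition I would actually impose is local: every $F_e$-edge incident to a sampled vertex has its other endpoint unsampled.
\item Then any short cycle through $e$ in the kept graph would have to use an $F_e$-edge with both endpoints sampled, which is excluded.
\item So the kept graph $H'$ has girth $>2k$, and the Moore bound gives $|E(H')| \le O((qn)^{1+1/k})$.
\item Since each vertex has at most $f$ $F_e$-edges, each has its other endpoint sampled with probability only $1/(cf)$.
\end{itemize}

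\emph{Upper bound, main obstacle.} The hard part is the lower bound on $\Pr[e \text{ kept}]$. The naive condition ranges over all vertices within distance $k$ of $e$, which could be far too many. I would try to restrict attention to the at most $2k$ vertices of a single candidate cycle. Concretely, I would condition on the cycle vertices, charge the exposure of each one to its $\le f$ fault edges, and obtain survival probability $q^2 (1-q)^{O(kf)} = q^2 e^{-O(k)}$. Granting that bound, I compare expectations:
$$q^2 e^{-O(k)}\,|E(H)| \le O\big((n/f)^{1+1/k}\big), \qquad \text{so} \qquad |E(H)| \le e^{O(k)} f^{1-1/k} n^{1+1/k}.$$
If the single-cycle charging does not go through, the fallback is a more careful iterative deletion of short cycles in $H[S]$ using the witness sets.
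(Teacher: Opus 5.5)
\textbf{Lower bound.} This statement is quoted from \cite{BHP24}; the paper gives no proof of it. The only overlapping argument is the cloud blowup in Section~\ref{sec:cloudblowup}. Your lower bound is correct, and it is exactly that standard argument applied to a base graph of girth $>2k$: the base fault sets are empty, and the lifted fault set is the bundle minus $\widehat e$. One small slip: the copies $a_{i'}$ with $i'\neq i$ lose all $f$ of their edges into the $b$-cloud, so $F$ has maximum degree $f$, not $f-1$. It is still a legal fault set, so nothing breaks.

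\textbf{Upper bound.} This half has a genuine gap, located exactly at the step you grant.

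\emph{The keep rule fails globally.} As stated, the rule requires that no edge of $F_e$ has both endpoints in $S$, which is a condition on all of $F_e$. Nothing forces $F_e$ to be small: a degree-$f$ fault set may have $\Theta(fn)$ edges. A graph of maximum degree $f$ with $m$ edges contains a matching of size at least $m/(2f-1)$, so $F_e$ may contain $\Omega(n)$ disjoint edges. The probability that none of them is fully sampled is then $(1-q^2)^{\Omega(n)}=\exp(-\Omega(n/(cf)^2))$, not $e^{-O(k)}$.

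\emph{Pruning does not obviously help.} Restricting $F_e$ to edges that lie on short cycles through $e$ leaves edges ranging over the radius-$k$ ball around $e$ in $H$. You have no a priori control on how many such edges there are; controlling such neighborhoods is essentially the Moore-bound question you are trying to answer.

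\emph{Single-cycle charging is not a valid substitute.} Whether $e$ is kept must be decided by one rule that destroys every short cycle through $e$ in $H[S]$ simultaneously. But $e$ may lie on many such cycles, each hit by a different edge of $F_e$. A bound like $q^2(1-q)^{O(kf)}$ on the event that the vertices of one fixed cycle are `clean' gives no lower bound on $\Pr[e\in H']$. So the comparison $q^2e^{-O(k)}|E(H)|\le O((n/f)^{1+1/k})$ is unsupported.

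\textbf{What is missing.} You need an ingredient that tames the witness sets. One option is a proof that only a bounded, local part of each $F_e$ is needed to block the short cycles on which $e$ is heaviest, so that a blocking-set style sampling count goes through at rate $q\approx 1/f$. Another is a different counting argument that avoids this issue altogether. This is where the difficulty of the cited upper bound lies, and neither your main line nor the unspecified ``iterative deletion'' fallback supplies it. As written, only the conditional lower bound is proved.
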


Thus conditionally nearly-tight bounds are known for this problem, up to a factor of $\exp(k)$, and assuming the girth conjecture.
Followup work by Parter and Tzalik \cite{PT25} obtained matching upper and lower bounds in the reachability ($k=\infty$) setting, and Biniaz, Carufel, Maheshwari, and Smid \cite{BDMS24} obtained results in the DFT model for metric and geometric spanners.

\subsection{Our Results}

The contribution of this paper is to show that the additional strength of the DFT model is enough to obtain \emph{unconditional} lower bounds.
We show:

\begin{theorem} \label{thm:intromaindft}
For all positive integers $k$, there are $n$-node input graphs $G$ for which any $f$-DFT $(2k-1)$-spanner $H$ must have
$$|E(H)| \ge \Omega\left( f^{1-1/k} \cdot n^{1+1/k}\right).$$
\end{theorem}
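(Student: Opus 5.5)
Our plan is to use the Wenger graphs, in the form reinterpreted by \Szabo{} and by Conlon, as the lower-bound instances. For a prime $p$ and the parameter $k$, the Wenger graph $W_k(p)$ is bipartite, with both sides equal to $\Fp^{k}$. A ``point'' $(a_0,\dots,a_{k-1})$ is adjacent to a ``line'' $(b_0,\dots,b_{k-1})$ when the $k-1$ equations $a_i + b_i = a_{i-1}b_{0}$ (or an equivalent polynomial form) hold.

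In Conlon's description, each line is the set of points $\{(t, g(t), \dots)\}$ traced out as a parameter $t\in\Fp$ varies, that is, a ``moment-curve'' type lift of a polynomial of degree less than $k$. The graph is $p$-regular on $N=2p^k$ vertices, so it has $\Theta(N^{1+1/k})$ edges. It does not have girth $>2k$. Instead, the property we expect to extract is that every short cycle, of length at most $2k$, must reuse an edge ``label'' in a structured way.

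More precisely, we color each edge by its parameter $t\in\Fp$. The key lemma we will aim for is the following: any cycle of length $\le 2k$ in $W_k(p)$ contains two edges sharing an endpoint whose labels lie in a small set. Equivalently, after deleting an appropriate set of edges of bounded degree at each vertex, every remaining edge is the unique short path between its endpoints. The underlying fact is that a polynomial of degree $<k$ vanishing at $k$ distinct parameter values is zero. So a cycle of length $2\ell\le 2k$ forces the collection of alternating parameter differences to have a repeat, and that repeat is exactly what a degree-bounded fault set can target.

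With $f$ available, we then blow up the graph. We take $p \approx (n/f)^{1/k}$-type parameters, or replace each vertex by $f$ copies with complete bipartite connections between copies of adjacent vertices. Each original edge thereby becomes a $K_{f,f}$, so the degree becomes $pf$ and the size is $f^2 p^{k+1}$ on $n=fp^k$ nodes, which gives $f^{1-1/k}n^{1+1/k}$, as required. The fault set $F$ is then chosen per edge $(u,v)$ of $G$ missing from $H$. For each such edge we fail, at every vertex, the at most $f$ edges that could close a competing path of length $\le 2k-1$ of the relevant type. This is done so that $\dist_{G\setminus F}(u,v)=1$ while $\dist_{H\setminus F}(u,v)>2k-1$. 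It forces $H$ to contain a constant fraction of the edges, which completes the argument.

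The main obstacle is the cycle-structure lemma: showing that all $u$–$v$ paths of length $\le 2k-1$ avoiding the edge $(u,v)$ can be blocked by a fault set of maximum degree $f$ that does not contain $(u,v)$ itself. Our first attempt will be to parametrize the paths by their sequence of labels $t_1,\dots,t_{2\ell}$ and use the Vandermonde/polynomial argument. The goal is to show that some consecutive pair must repeat a label that is ``adjacent'' to the label of $(u,v)$ at the vertex $u$ or $v$. Faulting the $O(1)$ edges per copy with those labels then kills every such path. If a single global fault set fails, we will instead average over a random choice of fault labels and lose only a constant factor.
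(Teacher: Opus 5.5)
Your outline has the right ingredients: Wenger graphs, Vandermonde independence of the slope vectors $\sigma_t$, a fault set chosen per missing edge, and the cloud blowup. However, the step you defer as ``the main obstacle'' is essentially the whole proof, and the route you sketch for it does not work.

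What the Vandermonde argument gives is that in a cycle of length $\le 2k$, every slope that occurs is used by at least two lines. In particular the slope of the line $\ell$ in $e=\edge{a}{\ell}$ recurs on some $\ell'\parallel\ell$. That recurrence can sit anywhere on the cycle, not next to $e$ at $u$ or $v$. You also cannot fault ``by label'': all $p$ edges at a line vertex carry that line's slope, so failing the edges of a given label has degree $p$ at every line of that slope.

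The random/global fallback fails for a structural reason. Suppose a single fault set $F$ of maximum degree $f$ left a constant fraction of the edges on no cycle of length $\le 2k$ in $G\setminus F$. Those edges would form a subgraph of girth $>2k$ with $\Omega(f^{1-1/k}N^{1+1/k})$ edges. For $f=1$ this would prove the girth conjecture, and for large $f$ (with $k\ge 2$) it contradicts the Moore bound. The same objection applies to your ``equivalently'' reformulation if it is read as one global deletion. So the fault set must depend on $e$, must hit \emph{every} short cycle through $e$, and must touch each vertex at most once.

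The missing idea is how to choose, on each line $\ell'\parallel\ell$, a single edge determined canonically by $a$. The paper lets $F_e$ consist of the edges $\edge{b}{\ell'}$ with $\ell'\parallel\ell$ and $b$ reachable from $a$ by an $\ell$-transverse path of line-length $\le\frac{k}{2}-1$. It then proves two lemmas.

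\emph{Uniqueness.} Two such paths ending at distinct $b,b'\in\ell'$ close up through $\ell'$ into a cycle of length $\le 2k$ in which the slope of $\ell$ is used exactly once. This contradicts independence, so $F_e\cup\{e\}$ is a matching.

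\emph{Hitting.} In a cycle of length $2r\le 2k$ through $e$, take the last line $\ell_i\parallel\ell$. The arc of the cycle from $a$ to $a_{i+1}$ avoids the slope of $\ell$, so it uses at most $\frac{r}{2}-1$ distinct slopes. Its steps $a_{j+1}-a_j=\lambda_j\sigma_{t_j}$ commute, so you can reorder and merge equal-slope steps to get an $\ell$-transverse path of line-length $\le\frac{k}{2}-1$. Hence $\edge{a_{i+1}}{\ell_i}\in F_e$.

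This reordering relies on the lines being affine lines $\{x+\lambda\sigma_t\}$ with directions on the moment curve, which is Conlon's model. It does not work for lines viewed as curves traced out by $t$, as you describe them.

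Finally, in the blowup the fault set for $\widehat e\in\Lift(e)$ must also contain $\Lift(e)\setminus\{\widehat e\}$. Otherwise length-$3$ paths inside the biclique survive. Its degree stays $\le f$ exactly because $F_e\cup\{e\}$ is a matching.
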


These lower bounds are both optimal up to $O_k(1)$ factors, and they hold via explicit constructions that do not assume the girth conjecture.
This makes DFT spanners essentially the first variants of multiplicative spanners where one can obtain unconditionally near-optimal lower bounds.
For essentially every other variant, the best unconditional lower bounds are inherited from the lower bounds for the function $\ex(n, C_{\le 2k})$, where the state-of-the-art is from a 1995 work of Lazebnik, Ustimenko, and Woldar \cite{LUW95}.
These bounds are polynomially below the bound posited by the girth conjecture, meaning that our results represent polynomial improvements in the known unconditional bounds.

These theorems may also have independent interest from the standpoint of extremal combinatorics.
In some sense, they state that the DFT model captures the gap between known algebraic techniques and the structure that would be needed to prove the girth conjecture, if possible.
More specifically, our proofs analyze the \emph{Wenger graphs} \cite{Wenger91}, which are an elegant algebraic construction that notably achieves optimal lower bounds on the function $\ex(n, C_{2k})$ for $k \in \{1, 2, 3, 5\}$.\footnote{The Wenger graphs also imply the girth conjecture for $k \in \{1, 2, 3\}$, but not for $k=5$: although they do not have $C_{10}$ subgraphs, they do have $C_8$ subgraphs.}
\Szabo{} \cite{Szabo18} and Conlon \cite{Conlon21} independently devised a reinterpretation of Wenger's original construction with a more algebraic flavor.
We adopt their interpretation, and make algebraic arguments to reason about the fine-grained structure of the short cycles that appear in these graphs.
The main part of our proofs can be viewed as a formalization of the sense in which the Wenger graphs come close to proving the girth conjecture; that is, their short cycles are sparsely ``spread out'' throughout the graph in a way that lets one destroy the relevant ones with a low-degree fault set.

\section{Proof of Main Result}

\subsection{Construction of Incidence System and Fault Sets}

We begin by constructing an incidence system, consisting of $n$ \emph{points} and $n$ \emph{lines} (subsets of the points).
This construction and its incidence graph are originally by Wenger \cite{Wenger91}, while our algebraic exposition follows \Szabo{} \cite{Szabo18} and Conlon \cite{Conlon21}.

\begin{itemize}
\item \textbf{(The parameters.)} The construction is parametrized by an arbitrary prime $p$ and an arbitrary integer $k \ge 1$.
We will have $n:=p^k$ points and lines in the construction.
We will use the standard shorthand $[p] := \{0, 1, \dots, p-1\}$.

\item \textbf{(The points.)} The set of points $A$ is the elements of the finite field $\Fp^k$.
We interpret $\Fp^k$ as a vector space with scalars in $\Fp$, and so we will perform scalar multiplication with scalars from $[p]$.

\item \textbf{(The lines.)} For any $t \in [p]$, let
$\sigma_t := (1,t,t^2,\dots,t^{k-1}) \in \Fp^k$;
we will refer to $\sigma_t$ as a \emph{slope vector}.
The set of lines $L$ is exactly the lines of slope $\sigma_t$ for any $t \in [p]$.
In other words, a subset of points forms a line $\ell$ iff there exists $t \in [p]$ such that $\ell$ is a maximal subset of points with the property that any two points in $\ell$ differ by a multiple of $\sigma_t$.
Formally:
$$L := \left\{ \big\{x + \lambda \sigma_t \ \mid \ \lambda \in [p] \big\} \mid x \in \Fp^k,  t \in [p] \right\}.$$

\item \textbf{(The graph.)} We define \(\Bk\) to be the bipartite \emph{incidence graph} between the set of points \(A\) and the set of lines \(L\).
That is: the left side is \(A\), the right side is \(L\), and there is an edge \(\edge{a}{\ell} \in E(\Bk)\) iff \(a\in \ell\).
\end{itemize}

\begin{figure}[t]
\begin{center}
\begin{tikzpicture}[scale=1]
  \foreach \x in {0,1,2,3,4}{
    \foreach \y in {0,1,2,3,4}{
      \fill (\x,\y) circle (2.2pt);
    }
  }

  \node[below left] at (0,0) {$(0,0)$};
  \node[above right] at (4,4) {$(4,4)$};

  \draw[red, very thick]
    (0,0) to node [midway, below right, red] {\small $\sigma_0 = (1, 0)$} (1, 0) -- (4,0);
  
  \draw[yellow!80!black, very thick]
    (0,0) to node [midway, below right, yellow!80!black, sloped] {\small $\sigma_1 = (1, 1)$} (1,1) -- (2,2) -- (3,3) -- (4,4);

  \draw[blue, very thick]
    (0,0) to node [midway, above right, blue, sloped] {\small $\sigma_2 = (1, 2)$} (1,2) -- (2,4) -- (3,1) -- (4,3);

\begin{scope}[shift={(6, 0)}]
  \foreach \x in {0,1,2,3,4}{
    \foreach \y in {0,1,2,3,4}{
      \fill (\x,\y) circle (2.2pt);
    }
  }


  \fill[red] (6,1.2) circle (2.8pt);
  \fill[yellow!80!black] (6,2.2) circle (2.8pt);
  \fill[blue] (6,3.2) circle (2.8pt);


  \draw[red, thick] (6,1.2) -- (0,0);
  \draw[red, thick] (6,1.2) -- (1,0);
  \draw[red, thick] (6,1.2) -- (2,0);
  \draw[red, thick] (6,1.2) -- (3,0);
  \draw[red, thick] (6,1.2) -- (4,0);

  \draw[yellow!80!black, thick] (6,2.2) -- (0,0);
  \draw[yellow!80!black, thick] (6,2.2) -- (1,1);
  \draw[yellow!80!black, thick] (6,2.2) -- (2,2);
  \draw[yellow!80!black, thick] (6,2.2) -- (3,3);
  \draw[yellow!80!black, thick] (6,2.2) -- (4,4);

  \draw[blue, thick] (6,3.2) -- (0,0);
  \draw[blue, thick] (6,3.2) -- (1,2);
  \draw[blue, thick] (6,3.2) -- (2,4);
  \draw[blue, thick] (6,3.2) -- (3,1);
  \draw[blue, thick] (6,3.2) -- (4,3);
  \end{scope}
\end{tikzpicture}
\end{center}
\caption{(Left) The $25$ points and first $3$ lines used in the construction with $k=2$ and $p=5$.  (Right) The corresponding incidence graph.  In the full system, there would be $25$ lines and therefore $25$ nodes on the right side of the incidence graph.}
\end{figure}

The graph $\Bk$ has the following statistics:

\begin{lemma} [Size of $\Bk$] \label{lem:bksize}
The graph $\Bk$ has $2n$ nodes and $n^{1+1/k}$ edges.
\end{lemma}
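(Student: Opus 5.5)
We count the two vertex classes and then the edges by degree in the bipartite graph $\Bk$, the only real work being a check that no line is listed twice.

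\textbf{Points.} The point set is $A = \Fp^k$, so $|A| = p^k = n$.

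\textbf{Lines.} We first show $|L| = p^{k-1}\cdot p = n$.

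Fix a slope $t \in [p]$. The lines of slope $\sigma_t$ are exactly the cosets $x + \mathrm{span}(\sigma_t)$ of the one-dimensional subspace spanned by $\sigma_t$. This subspace is one-dimensional because the first coordinate of $\sigma_t$ is $1$, so $\sigma_t \neq 0$. These cosets partition $\Fp^k$ into $p^k/p = p^{k-1}$ lines, each containing exactly $p$ points.

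Next we rule out coincidences between different slopes. Suppose $t \ne t'$ and a line of slope $\sigma_t$ equals a line of slope $\sigma_{t'}$. Then the two lines are cosets of the same subspace, so $\mathrm{span}(\sigma_t) = \mathrm{span}(\sigma_{t'})$ and $\sigma_{t'} = c\,\sigma_t$ for some scalar $c$. Comparing first coordinates forces $c = 1$.

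\emph{Case $k \ge 2$.} Comparing second coordinates then gives $t = t'$, a contradiction. So the $p$ slopes give pairwise disjoint families of lines, and $|L| = p \cdot p^{k-1} = n$.

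\emph{Case $k = 1$.} The definition degenerates here. Every $\sigma_t$ equals $(1)$, and the only line is $\Fp$ itself. Under the set definition of $L$, this would give a single line rather than $n$ of them. The intended convention is to index lines by the pair (slope, coset), that is, as a multiset with one copy per slope. With that convention there are $p = n$ lines, each equal to $\Fp$, and $\Bk$ is $K_{p,p}$ with $p^2 = n^{1+1/k}$ edges. We flag this case and adopt the (slope, coset) indexing throughout.

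\textbf{Edges.} Each line contains exactly $p$ points, because $\lambda \mapsto x + \lambda\sigma_t$ is injective when $\sigma_t \ne 0$. Counting edges from the line side gives
\[
|E(\Bk)| = |L|\cdot p = p^{k+1} = n^{1+1/k}.
\]

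As a cross-check, count from the point side. Each point lies on exactly one line of each slope, so it has degree $p$, and again $|E(\Bk)| = n\,p = n^{1+1/k}$.

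Combining the counts, $\Bk$ has $|A| + |L| = 2n$ nodes and $n^{1+1/k}$ edges. The main, and only mild, obstacle is the distinctness of lines across slopes, which is settled by the comparison of the first two coordinates of $\sigma_t$ above.
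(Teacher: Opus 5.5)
Your proof is correct and uses essentially the same counting as the paper: $p^k$ points, $p^{k-1}$ cosets of $\mathrm{span}(\sigma_t)$ for each of the $p$ slopes (equivalently, the paper's $p^k \cdot p$ generating pairs divided by $p$), and $p$ incidences per line. Your explicit check that lines of different slopes are different sets, and your remark that for $k=1$ the literal set definition collapses every line to $\Fp$ (so lines must be counted with multiplicity by slope), make precise a step the paper's ``each line can be generated in $p$ ways'' silently assumes.
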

\begin{proof}
There are $n=p^k$ nodes in $\Fp^k$.
The lines are indexed by $p^k$ possible choices of $x$ and $p$ possible choices of $t$, but each line can be generated in $p$ ways (by choosing any of its $p$ points for $x$), so there are $p^k \cdot p / p = p^k = n$ lines.
Thus $\Bk$ has $n$ nodes on each side of its bipartition, for $2n$ nodes in total.
Each line contains exactly $p$ points, so it generates $p$ edges in $\Bk$.
So there are $p^{k+1} = n^{(k+1)/k} = n^{1+1/k}$ edges in $\Bk$.
\end{proof}

Our goal will be to argue that \(\Bk\) is the only \(1\)-DFT \((2k-1)\)-spanner of itself.
This gives our desired lower bound in the special case $f=1$.
Later, in Section \ref{sec:cloudblowup}, we will describe a standard technique used to boost the lower bound to arbitrary $f \ge 1$.
The first step towards this goal is, for every edge $e$, to define a corresponding edge set $F_e$ of maximum degree $1$ (i.e., a matching) such that $e$ does not participate in any cycles of length $\le 2k$ in the graph $\Bk \setminus F_e$.
The next step in our proof is to define these sets $F_e$.
The following definitions will be useful:

\begin{definition} [Line and Path Properties]
Two lines $\ell, \ell' \in L$ are \textbf{parallel}, written $\ell \parallel \ell'$, if they are distinct and they have the same direction vector \(\sigma_t\).
A path $\pi$ in $\Bk$ with vertex sequence
$$
\pi=(a=a_0,\ell_0,a_1,\ell_1,\dots,\ell_{r-1},a_r=b)
$$
is called \textbf{\(\ell\)-transverse} if it is simple\footnote{Recall: a simple path is one that does not repeat nodes.} and none of the lines $\{\ell_0,\ell_1,\dots,\ell_{r-1}\}$ are equal to $\ell$ or parallel to \(\ell\).
Its \textbf{line-length} is the number of lines in $L$ corresponding to nodes in the path; for example, the above path $\pi$ has $\linelength(\pi) = r$.
\end{definition}

\begin{definition} [Failure Set $F_e$] \label{def:fe}
For any edge $e = \{a, \ell\} \in E(\Bk)$, we define \(F_e\subseteq E(\Bk)\setminus\{e\}\) to be the set of all edges \(\edge{b}{\ell'}\) such that \(\ell \parallel \ell'\), and there exists an \(\ell\)-transverse path from \(a\) to \(b\) of line-length $\le \frac{k}{2}-1$.\footnote{The expression $\frac{k}{2}-1$ is not necessarily integral, but line-length is necessarily integral, so this is equivalent to an upper bound on line-length of $\le \lfloor \frac{k}{2}-1\rfloor$.  Since the floor symbols are not technically necessary, we will omit them here and for other possibly-non-integral expressions encountered as upper bounds in the rest of this paper.}
\end{definition}

\begin{figure}
\begin{center}
\begin{tikzpicture}[scale=1.1, every node/.style={font=\small}]
  \draw[blue, thick] (0,-0.5) -- (0,2.2);
  \draw[blue, thick] (3.4,-0.5) -- (3.4,2.2);

  \node[blue, left] at (0,2.0) {$\ell$};
  \node[blue, right] at (3.4,2.0) {$\ell'$};

  \fill (0,0.3) circle (2.2pt);
  \node[left] at (-0.1,0.3) {$a$};

  \fill (3.4,1.0) circle (2.2pt);
  \node[right] at (3.4,1.1) {$b$};

  \fill (1.2,1.1) circle (1.7pt);
  \fill (2.2,0.2) circle (1.7pt);

  \draw[orange, thick] (0,0.3) -- (1.2,1.1);
  \draw[yellow!80!black, thick] (1.2,1.1) -- (2.2,0.2);
  \draw[green!60!black, thick] (2.2,0.2) -- (3.4,1.0);

  \draw[red, thick] (0,0.3) circle (0.1);
  \draw[red, ultra thick] (0,0.3-0.1) -- (0,0.3-0.6);
  \draw[red, ultra thick] (0,0.3+0.1) node [above left, red] {$e = \{a, \ell\}$} -- (0,0.3+0.6);

  \draw[red, thick] (3.4,1.0) circle (0.1);
  \draw[red, ultra thick] (3.4,1.0-0.1) -- (3.4,1.0-0.6);
  \draw[red, ultra thick] (3.4,1.0+0.1) node [above right, red] {$\{b, \ell'\} \in F_e$} -- (3.4,1.0+0.6);
\end{tikzpicture}
\end{center}
\caption{When considering the edge $e$ corresponding to the point-line incidence $a \in \ell$, we add to its associated fault set $F_e$ the edges corresponding to point-line incidences $b \in \ell'$, where $\ell \parallel \ell'$ and $b$ can be reached from $a$ by an $\ell$-transverse path of small line-length.}
\end{figure}

\subsection{Bound on Degree of $F_e$}

Our next goal is to prove that, for each edge $e \in E(\Bk)$, the associated fault set $F_e$ has maximum degree $1$.
We will need two supporting structural lemmas before the main proof.


\begin{lemma}
\label{lem:no-unique-slope}
For any closed, non-backtracking\footnote{Recall: a walk is \emph{closed} if it starts and ends at the same node, and it is \emph{non-backtracking} if it never uses the same edge twice in a row with opposite orientations.} walk $W$ in \(\Bk\) of length $|W| \le 2k$, there is no $t \in [p]$ such that \(W\) contains exactly one line of slope \(\sigma_t\).
(This allows the case where $W$ contains two instances of the same line $\ell$.)
\end{lemma}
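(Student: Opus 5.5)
The plan is to translate the walk into a vanishing linear combination of slope vectors and then use the linear independence of the $\sigma_t$'s (a Vandermonde argument). The real difficulty is the case the statement explicitly allows, where a single line $\ell$ of slope $\sigma_t$ occurs several times in $W$. I can reduce that case to a statement about the sub-walks between the visits to $\ell$; the last reduction step is the part I am least sure of.

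\textbf{Setup.} Write $W=(a_0,\ell_0,a_1,\ell_1,\dots,\ell_{r-1},a_r=a_0)$ with $2r=|W|\le 2k$, so $r\le k$. If the walk instead starts at a line, rotate it; a closed non-backtracking walk stays non-backtracking under rotation, taking non-backtracking to include the wrap-around step, which I will assume. Let $\ell_i$ have slope $\sigma_{t_i}$. Since $a_i,a_{i+1}\in \ell_i$, we can write $a_{i+1}-a_i=\lambda_i\sigma_{t_i}$ with $\lambda_i\in \Fp$. Non-backtracking forces $a_{i+1}\neq a_i$, so $\lambda_i\neq 0$. Summing around the walk gives
$$\sum_{s}\Big(\sum_{i:\,t_i=s}\lambda_i\Big)\sigma_s=0 .$$
At most $r\le k$ distinct slopes appear. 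Any $\le k$ distinct vectors $\sigma_s=(1,s,\dots,s^{k-1})$ form the rows of a Vandermonde-type matrix of full rank, so they are linearly independent over $\Fp$. Hence each grouped coefficient $\sum_{i:t_i=s}\lambda_i$ vanishes. In particular, if slope $t$ occurs at exactly one index, we get $\lambda_i=0$, a contradiction. This settles the case where the unique line of slope $t$ appears once.

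\textbf{Repeated line.} Now suppose every index with $t_i=t$ has $\ell_i=\ell$, and $\ell$ occurs at indices $i_1<\dots<i_m$ with $m\ge 2$. Consider the sub-walk $S_j$ from the exit point $a_{i_j+1}$ to the next entry point $a_{i_{j+1}}$, taken cyclically. Its displacement is a combination of slope vectors $\sigma_s$ with $s\neq t$. It is also a multiple of $\sigma_t$, because both endpoints lie on $\ell$. All the slopes involved number at most $k$, so independence makes this multiple zero. Therefore each $S_j$ is a closed walk from a point $x_j\in\ell$ back to itself. It is nonempty, since otherwise we would see $\ell,x_j,\ell$, which is a backtrack. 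It contains no line of slope $t$, and it has length at most $|W|-4$, since it omits at least two edges per occurrence of $\ell$.

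\textbf{Finishing by minimal counterexample.} Take $W$ to be a counterexample of minimal length. If $S_1$ is non-backtracking as a closed walk, then I would like to apply the lemma inductively to it. This requires knowing that $S_1$ itself has some slope carried by a unique line, which is \emph{not} automatic; this is the main obstacle. My first attempt is to exploit the point $x_1\in\ell$. The first line $m$ and last line $m'$ of $S_1$ both pass through $x_1$ and have slopes different from $t$. If $m=m'$, the walk $x_1,m,\dots,m,x_1$ can be shortcut to the sub-walk strictly inside. This yields a shorter closed walk which stays non-backtracking after the usual trimming. If $m\neq m'$, then $\ell$, $m$, $m'$ are three distinct lines through $x_1$. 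In that case I would replace $W$ by the shorter closed walk obtained by deleting the detour $S_1$ and the repeated visit to $\ell$. This still has slope $t$ carried only by $\ell$, now with fewer occurrences. Iterating the argument reduces $m$ to $1$, which was excluded above. If the shortcut ever produces backtracking, I would peel it off in the standard way, which only shortens the walk and preserves the property that slope $t$ is carried only by $\ell$. Either way, minimality gives a contradiction.

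I expect the delicate bookkeeping to be in verifying, at each deletion, that non-backtracking survives and that the occurrence of $\ell$ is not removed entirely.
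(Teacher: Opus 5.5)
Your \textbf{Setup} paragraph is exactly the paper's proof. You write $a_{j+1}-a_j=\lambda_j\sigma_{t_j}$ with $\lambda_j\neq 0$ by non-backtracking, sum around the walk, and use Vandermonde independence of the at most $k$ distinct slope vectors. This shows every slope occurring in $W$ occurs at two or more positions of the walk. That is the whole content of the lemma as the paper intends it: ``lines'' are counted as occurrences along $W$. The parenthetical says that the two occurrences of slope $\sigma_t$ guaranteed by the lemma are \emph{allowed} to be two visits to the same line $\ell$; it does not ask you to rule that case out. The lemma is only ever applied to simple cycles (the $W'$ in Lemma~\ref{lem:canonical-endpoint} and the cycle $C$ in Lemma~\ref{lem:cycle-hit}), where occurrences and distinct lines coincide. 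So you could stop after your first paragraph.

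Your \textbf{Repeated line} and \textbf{Finishing by minimal counterexample} paragraphs try to prove the stronger claim that no slope is carried by a single distinct line. That claim is false, so the gap you flagged cannot be closed. Take distinct $t,s,s'\in[p]$, $k\ge 10$, a point $x$, $y:=x+\sigma_t$, and let $\ell$ be the line through $x$ and $y$. For a point $z$, let $P_z$ be the parallelogram $8$-cycle through $z,\ z+\sigma_s,\ z+\sigma_s+\sigma_{s'},\ z+\sigma_{s'}$, using lines of slopes $s,s',s,s'$. Let $W$ go $x\to y$ along $\ell$, then around $P_y$, then $y\to x$ along $\ell$, then around $P_x$. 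This walk is closed and non-backtracking even across the wrap-around: each visit to $\ell$ is flanked by $x\neq y$, and each visit to $x$ or $y$ switches between $\ell$ and a line of slope $s$ or $s'$. It has length $20\le 2k$, and $\ell$ is its only line of slope $\sigma_t$.

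This is precisely where your reduction breaks. Here $S_1=P_y$ has first and last lines of different slopes, so you are in your case $m\neq m'$. Deleting $S_1$ and merging the two visits to $\ell$ cancels the steps $x\to y$ and $y\to x$, so $\ell$ disappears entirely and no smaller counterexample is produced. Nor can the inductive hypothesis be applied to $S_1$, since in a parallelogram every slope is carried by two distinct lines. Your intermediate claim that each $S_j$ is a closed walk is correct, and it is exactly what makes such examples possible.
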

\begin{proof}
Let the vertex sequence of $W$ be
$$
W=(a_0,\ell_0,a_1,\ell_1,\dots,\ell_{r-1},a_r=a_0),
$$
where each $a_j$ is a point and each $\ell_j$ is a line with slope vector $\sigma_j$, and $r \le k$.
Since \(a_j,a_{j+1}\in \ell_j\), there exists a scalar \(\lambda_j\in \Fp\) such that $a_{j+1}=a_j+\lambda_j \sigma_j$.
Moreover, since the walk is non-backtracking, we have $a_j \ne a_{j+1}$ and so these scalars are all nonzero.
Summing these equations around the walk $W$ gives the identity
$$
\sum_{j=0}^{r-1}\lambda_j \sigma_j = 0.
$$
However, it follows from the usual analysis of the Vandermonde matrix \cite{Conlon21} that any $k$ slope vectors $\{\sigma_j\}$ are linearly independent.
Thus, this identity is possible only if after grouping equal slope vectors together, their associated coefficients must sum to $0$.
Since each individual coefficient $\lambda_j$ is nonzero, this implies that each slope vector must have at least two associated coefficients, implying the lemma.
\end{proof}

\begin{lemma}[Unique endpoints]
\label{lem:canonical-endpoint}
Let \(\ell,\ell'\) be parallel lines in $L$ and let \(a\in \ell\).
Then there is a point $b \in \ell'$ such that every $\ell$-transverse path of line-length \(\le \frac{k}{2}-1\) that starts at $a$ and ends at a point in $\ell'$ specifically ends at $b$.
\end{lemma}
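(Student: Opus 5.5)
We argue by contradiction, combining two such paths into a short closed walk and applying Lemma~\ref{lem:no-unique-slope}. If no $\ell$-transverse path of line-length $\le \frac{k}{2}-1$ from $a$ reaches $\ell'$, any $b\in\ell'$ works vacuously. So suppose two such paths exist,
$$\pi_1 = (a, \ell_0, a_1, \dots, \ell_{r-1}, b_1), \qquad \pi_2 = (a, \ell'_0, a'_1, \dots, \ell'_{s-1}, b_2),$$
with $b_1, b_2 \in \ell'$, $b_1 \ne b_2$, and $r,s \le \frac{k}{2}-1$. We need to show this is impossible.

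\textbf{Building the closed walk.} Let $\sigma_t$ be the common slope of $\ell$ and $\ell'$. Concatenate four pieces:
\begin{itemize}
\item the edge from $a$ to $\ell$;
\item the edge from $\ell$ back to $a$, then $\pi_1$ to $b_1$;
\item the edge from $b_1$ to $\ell'$, then from $\ell'$ to $b_2$;
\item the reverse of $\pi_2$ from $b_2$ back to $a$.
\end{itemize}
The detour through $\ell$ is backtracking, so it is cleaner to work algebraically, as in the proof of Lemma~\ref{lem:no-unique-slope}. Consider the closed walk
$$a \xrightarrow{\pi_1} b_1 \to \ell' \to b_2 \xrightarrow{\pi_2^{-1}} a.$$
Its line-length is $r+s+1 \le k-1 \le k$, so its length is at most $2k$. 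Summing the increments gives
$$\sum_j \lambda_j\sigma_{t_j} - \sum_j \lambda'_j \sigma_{t'_j} + \mu\,\sigma_t = 0,$$
where $\mu \ne 0$ because $b_1 \ne b_2$.

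\textbf{Isolating the slope $\sigma_t$.} Transversality says no line of $\pi_1$ or $\pi_2$ equals $\ell$ or is parallel to $\ell$. Hence none of them has slope $\sigma_t$: a line with slope $\sigma_t$ is either $\ell$ itself or parallel to it. So $\sigma_t$ appears exactly once in the sum, with nonzero coefficient $\mu$. At most $r+s+1 \le k$ distinct slope vectors appear, and any $k$ slope vectors are linearly independent (the Vandermonde fact used earlier). Grouping coefficients by distinct slope vectors, the coefficient of $\sigma_t$ is $\mu \ne 0$, a contradiction. So $b_1 = b_2$, and this common point is $b$.

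\textbf{Main obstacle.} The delicate step is handling non-backtracking cleanly. I avoid invoking Lemma~\ref{lem:no-unique-slope} directly and redo its linear-algebra core. That argument never uses simplicity of the combined walk: it only needs the total number of slope vectors to be at most $k$ and the coefficient of $\sigma_t$ to be nonzero. Nonzero steps inside $\pi_1,\pi_2$ are not even needed, since only $\sigma_t$'s coefficient matters. If one prefers citing Lemma~\ref{lem:no-unique-slope} directly, the walk is non-backtracking provided $b_1\ne b_2$ and the paths are simple; at the junctions the lines on either side differ because they have different slopes ($\sigma_t$ versus not), so there is no backtracking there either.
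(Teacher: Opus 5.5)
Your proof is correct and follows the paper's argument: the same closed walk $\pi_1,\ell',\pi_2^{-1}$ with $\sigma_t$ appearing exactly once. The only difference is that you apply the Vandermonde independence from Lemma~\ref{lem:no-unique-slope} directly to this (possibly non-simple) walk, needing only $\mu\neq 0$, whereas the paper first extracts a simple cycle through $\ell'$ and then cites the lemma; your version avoids the cycle-extraction and non-backtracking bookkeeping. (Your optional closing remark that the walk is non-backtracking overlooks the junction at $a$, where $\pi_1$ and $\pi_2$ may begin with the same line, but your main argument does not depend on that remark.)
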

\begin{proof}
Suppose there are $\ell$-transverse paths with node sequences
$$
\pi_1=(a,\ell_0,a_1,\ell_1,\dots,\ell_{r-1},b) \quad \text{and} \quad \pi_2=(a,m_0,b_1,m_1,\dots,m_{s-1},b')
$$
ending at points \(b,b'\in \ell'\), of respective line-lengths $r, s$ both $\le \frac{k}{2}-1$.
Suppose for contradiction that \(b\neq b'\).
Concatenating \(\pi_1\) and the reverse of \(\pi_2\) with the node $\ell'$ placed between them produces a closed walk \(W\) in $\Bk$. Consider a minimal nonempty closed subwalk \(W' \subseteq W\) that contains an edge of \(\ell'\), which is a simple cycle (and hence non-backtracking).
We have
$$
\linelength(W') \le r+1+s \le 2 \cdot \left(\frac{k}{2} - 1\right) + 1 \le k-1.
$$
Hence the total length of $W'$ in $\Bk$ is $|W'| \le 2k$.
Moreover, the only line in \(W'\) parallel to \(\ell\) is \(\ell'\), since both \(\pi_1\) and \(\pi_2\) are \(\ell\)-transverse.
This contradicts Lemma~\ref{lem:no-unique-slope}.
Thus we have \(b=b'\), completing the proof.
\end{proof}

\begin{lemma}
\label{lem:Fe-degree1}
For any edge $e$, the edge set \(F_e \cup \{e\}\) has maximum degree \(1\).
\end{lemma}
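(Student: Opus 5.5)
Since $\Bk$ is bipartite with points on one side and lines on the other, a node of $F_e\cup\{e\}$ is either a point or a line, and I treat the two cases separately. In each case the goal is to rule out two distinct edges of $F_e \cup \{e\}$ sharing that node. The edge $e=\{a,\ell\}$ itself is covered by the same analysis, viewing it as the degenerate incidence reached by the trivial path of line-length $0$ from $a$ on line $\ell$. Note that every edge $\{b,\ell'\}\in F_e$ has $\ell'\parallel\ell$, so $\ell'\neq\ell$ and the line endpoint differs from that of $e$.

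\textbf{Line nodes.} Fix a line $\ell'$ and suppose $\{b,\ell'\},\{b',\ell'\}\in F_e$. Both $b$ and $b'$ are reached from $a$ by $\ell$-transverse paths of line-length $\le \frac{k}{2}-1$ ending in $\ell'$. Lemma~\ref{lem:canonical-endpoint} says all such paths end at one canonical point, so $b=b'$. Hence each line has degree at most $1$ in $F_e$. The line $\ell$ has degree exactly $1$ in $F_e\cup\{e\}$, because $F_e$ uses no edges at $\ell$.

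\textbf{Point nodes.} Fix a point $b$ and suppose $\{b,\ell_1\},\{b,\ell_2\}\in F_e$ with $\ell_1\neq\ell_2$. Both lines are parallel to $\ell$, hence have the same slope as each other. But two distinct lines of the same slope are disjoint cosets, so they cannot share the point $b$, a contradiction. Likewise, if $b=a$, then $a$ would lie on $\ell$ and on some $\ell'\parallel\ell$, which is impossible for the same reason. So $a$ is incident only to $e$ in $F_e\cup\{e\}$.

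The main step is already done by Lemma~\ref{lem:canonical-endpoint}, so the only care needed is in the definitions. One must check that $e\notin F_e$, which is built into Definition~\ref{def:fe} and in any case follows from $\ell'\ne\ell$. One must also check that parallel lines are disjoint, which holds because a shared point $x$ would give $\ell=x+\Fp\sigma_t=\ell'$. With these, every node has degree at most $1$, and $F_e\cup\{e\}$ is a matching.
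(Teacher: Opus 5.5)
Your proof is correct and essentially the same as the paper's. A line parallel to $\ell$ gets at most one $F_e$ edge by Lemma~\ref{lem:canonical-endpoint}, and $\ell$ and the non-parallel lines get none; each point node, including $a$, lies on at most one line of slope $\sigma_t$, because distinct parallel lines are disjoint cosets. You make explicit a few facts the paper uses tacitly, such as the disjointness of parallel lines and $e \notin F_e$; the ``degenerate incidence'' framing of $e$ is unnecessary, since you handle $e$ directly anyway.
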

\begin{proof}
Let $e =: \{a, \ell\}$.
We analyze $\deg_{F_e}(v)$ in cases:
\begin{itemize}
\item Suppose that $v$ corresponds to a point (rather than a line) in the incidence system.
Since $v$ lies on exactly one line $\ell'$ parallel to $\ell$ (or $v$ lies on $\ell$ itself), the only edge containing $v$ that might be in $F_e$ is $(v, \ell')$, so $\deg_{F_e}(v) \le 1$.

\item Suppose that $v$ corresponds to a line in the incidence system that is \emph{not} parallel to $\ell$.
Then by construction, $\deg_{F_e}(v) = 0$.

\item Suppose that $v$ corresponds to a line in the incidence system that is parallel to $\ell$.
By Lemma~\ref{lem:canonical-endpoint}, there is at most one point \(b \in v\) that is the endpoint of an \(\ell\)-transverse path from \(a\) of line-length $\le \frac{k}{2}-1$.
Hence \(v\) is incident to at most one edge of \(F_e\).

\item Suppose that $v = \ell$.
By construction, the only edge in $F_e \cup \{e\}$ incident to $v$ is $e$ itself.\qedhere
\end{itemize}
\end{proof}

\subsection{Correctness of Fault Set \(F_e\)}
\label{sec:blocking}

Our next goal is to prove that each edge $e$ does not participate in any cycles of length $\le 2k$ in the graph $\Bk \setminus F_e$.

\begin{figure}[t]
\begin{center}
\begin{tikzpicture}[scale=1, every node/.style={font=\small}]
  \node (a) at (0,0.3) {};
  \fill (a) circle (2.2pt);
  \node[left] at (a) {$a$};

  \node (b) at (4.0,1.35) {};
  \fill (b) circle (2.2pt);
  \node[right] at (b) {$b$};

  \node (p1) at (1.0,1.0) {};
  \fill (p1) circle (1.7pt);

  \node (p2) at (2.0,1.0) {};
  \fill (p2) circle (1.7pt);

  \node (p3) at (3.0,0.65) {};
  \fill (p3) circle (1.7pt);

  \draw[orange, thick] (a) -- (p1);
  \draw[yellow!80!black, thick] (p1) -- (p2);
  \draw[green!60!black, thick] (p2) -- (p3);
  \draw[orange, thick] (p3) -- (b);

  \node [align=center] at (2, -0.5) {Initial $a \leadsto b$ path,\\ which may use lines\\of repeated slope};

  \begin{scope}[shift={(5, 0)}]
  \node (a) at (0,0.3) {};
  \fill (a) circle (2.2pt);
  \node[left] at (a) {$a$};

  \node (b) at (4.0,1.35) {};
  \fill (b) circle (2.2pt);
  \node[right] at (b) {$b$};

  \node (p1) at (1.0,1.0) {};
  \fill (p1) circle (1.7pt);

  \node (p2) at (2.0,1.7) {}; 
  \fill (p2) circle (1.7pt);

  \node (p3) at (3.0,1.7) {}; 
  \fill (p3) circle (1.7pt);

  \draw[orange, thick] (a) -- (p1);
  \draw[orange, thick] (p1) -- (p2);
  \draw[yellow!80!black, thick] (p2) -- (p3);
  \draw[green!60!black, thick] (p3) -- (b);

  \node [align=center] at (2, -0.5) {Reorder so that\\same-slope lines\\are adjacent};
  \end{scope}
  \begin{scope}[shift={(10, 0)}]
  
  \node (a) at (0,0.3) {};
  \fill (a) circle (2.2pt);
  \node[left] at (a) {$a$};

  \node (b) at (4.0,1.35) {};
  \fill (b) circle (2.2pt);
  \node[right] at (b) {$b$};


  \node (p2) at (2.0,1.7) {}; 
  \fill (p2) circle (1.7pt);

  \node (p3) at (3.0,1.7) {}; 
  \fill (p3) circle (1.7pt);

  \draw[orange, thick] (a) -- (p2);
  \draw[yellow!80!black, thick] (p2) -- (p3);
  \draw[green!60!black, thick] (p3) -- (b);

  \node [align=center] at (2, -0.5) {Merge lines of repeated\\slope, to obtain $a \leadsto b$ path\\of shorter line-length};
  \end{scope}
\end{tikzpicture}
\end{center}
\caption{A depiction of the ``path-straightening'' technique used to prove Lemma \ref{lem:path-straightening}.}
\end{figure}

\begin{lemma}[Path straightening]
\label{lem:path-straightening}
Let $\pi$ be an $\ell$-transverse $a \leadsto b$ path in $\Bk$ that uses lines with $z$ distinct slope vectors.
Then there is an $\ell$-transverse path $\pi'$ with the same endpoints $a \leadsto b$ that has line-length $\le z$.
\end{lemma}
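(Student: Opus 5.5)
Since the points form the vector space $\Fp^k$, the plan is to work with the displacement vectors directly. Write $\pi=(a=a_0,\ell_0,a_1,\dots,\ell_{r-1},a_r=b)$ and, exactly as in the proof of Lemma~\ref{lem:no-unique-slope}, express each step as $a_{j+1}=a_j+\lambda_j\sigma_{t_j}$, where $\sigma_{t_j}$ is the slope vector of $\ell_j$ and $\lambda_j\neq 0$. Summing gives $b-a=\sum_j \lambda_j\sigma_{t_j}$. Group the terms by the $z$ distinct slopes $\tau_1,\dots,\tau_z$ appearing in $\pi$, ordered by their first appearance along $\pi$. This yields
$$b-a=\sum_{i=1}^{z}\mu_i\sigma_{\tau_i},\qquad \mu_i=\sum_{j:\,t_j=\tau_i}\lambda_j .$$
Vector addition commutes, so we may walk from $a$ to $b$ along the slopes one group at a time. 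Set $c_0=a$ and $c_i=c_{i-1}+\mu_i\sigma_{\tau_i}$; then $c_z=b$. Whenever $\mu_i\ne0$, the points $c_{i-1}$ and $c_i$ are distinct and lie on a common line $m_i$ of slope $\sigma_{\tau_i}$. Whenever $\mu_i=0$, we skip index $i$. The candidate path is then
$$\pi'=(c_0,m_1,c_1,\dots,m_z,c_z),$$
with skipped steps omitted. It has line-length at most $z$ and uses only slopes that already appear in $\pi$. Because $\pi$ is $\ell$-transverse, no slope of $\pi$ equals the slope of $\ell$, so none of the lines $m_i$ is equal or parallel to $\ell$. Hence transversality holds automatically once $\pi'$ is a simple path.

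The remaining step, and the main obstacle, is showing that $\pi'$ is simple, since the lemma's definition of $\ell$-transverse requires this. Lines with distinct slopes are distinct nodes, so the only possible repetition is among the points $c_i$. I would resolve this with a shortcut: if $c_i=c_j$ for some $i<j$, delete the segment strictly between them. This gives a shorter walk that still runs from $a$ to $b$, uses fewer lines, and draws only on the original slopes. Repeating this until no point repeats yields a simple path of line-length at most $z$. Backtracking cannot arise, because consecutive lines have distinct slopes and are therefore distinct nodes.

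If the lemma is later applied in a regime with $z\le k$, this step can also be handled directly. By Vandermonde independence of at most $k$ slope vectors, $c_i=c_j$ would force $\mu_{i+1}=\dots=\mu_j=0$, and all of those steps were skipped. Either way, the case $a=b$ is degenerate: it gives the trivial path of line-length $0$. The general argument does not need the Vandermonde fact, only the commutativity of vector addition, the shortcutting step, and the observation that the slope set can only shrink.
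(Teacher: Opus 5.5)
Your proof is correct and uses essentially the same argument as the paper: reorder the displacement vectors (commutativity in $\Fp^k$), merge steps of equal slope into a single line-step, and shortcut repeated points to get a simple path whose lines carry only slopes of $\pi$. The only difference is organizational: you merge all same-slope steps at once in a direct construction, whereas the paper takes a minimum-length path over the allowed slopes and derives a contradiction by merging a single pair of equal-slope steps.
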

\begin{proof}
Let $\pi'$ be an $a \leadsto b$ path in $\Bk$ of minimum length, among the paths with the property that every slope vector used by a line in $\pi'$ is also used by at least one line in $\pi$.
Note that, since $\pi$ is $\ell$-transverse, this implies that $\pi'$ is $\ell$-transverse as well.
Let the vertex sequence of $\pi'$ be
$$
\pi'=(a=a_0,\ell_0,a_1,\dots,\ell_{s-1},a_s=b).
$$
If all lines $\{\ell_i\}$ used by $\pi'$ have pairwise distinct slope vectors, then since $\pi'$ has at most $z$ distinct slope vectors, we get $s \le z$.
So $\pi'$ has line-length $\le z$ and it satisfies the lemma.

So, for the remaining part of the proof, assume towards contradiction that not all lines $\{\ell_i\}$ have distinct slope vectors.
For each index $i$, let $\Delta_i := a_{i+1} - a_i$.
We may therefore describe the sequence of point-nodes along $\pi'$ as
$$\left(a, a + \Delta_0, a + \Delta_0 + \Delta_1, \dots, a + \sum \limits_{i=0}^{s-1} \Delta_i = b\right).$$
Next, let $\phi$ be an arbitrary permutation on $[s]$.
We define $\pi_{\phi}$ as the walk in $\Bk$ whose sequence of point-nodes is
$$\left(a, a + \Delta_{\phi(0)}, a + \Delta_{\phi(0)} + \Delta_{\phi(1)}, \dots, a + \sum \limits_{i=0}^{s-1} \Delta_{\phi(i)} = b\right).$$
Note that adjacent point-nodes in this sequence differ by some vector $\Delta_i$, which is the slope of the line $\ell_i$, and so by construction there is indeed a line that contains these adjacent points.
This confirms that there is a unique walk $\pi_{\phi}$ with this sequence of point-nodes, so $\pi_{\phi}$ is well defined.

We assumed that a slope vector is used in $\pi'$ at least twice, so by choice of $\phi$, we may have
$$\Delta_{\phi(0)} = \lambda \cdot \Delta_{\phi(1)}$$
for some scalar $\lambda \in [p]$.
Thus, we may delete the second point-node along $\pi_{\phi}$, giving a shorter sequence
$$\left(a, a + \Delta_{\phi(0)} + \Delta_{\phi(1)}, \dots, a + \sum \limits_{i=0}^{s-1} \Delta_{\phi(i)} = b\right).$$
The difference between the first two point nodes is $(\lambda + 1) \cdot \Delta_{\phi(1)}$, so these points are still connected by a line of a viable slope vector. If $\lambda = -1$ then we simply omit that step, which still results in a shorter walk.
Thus there is a corresponding $a \leadsto b$ walk of length shorter than $\pi'$.
Deleting closed subwalks yields a path that contradicts minimality of the length of $\pi'$, completing the proof.
\end{proof}

\begin{figure}[t]
\begin{center}
\begin{tikzpicture}[scale=0.7, every node/.style={font=\small}]

  \draw[blue, thick] (0.4,1.0) -- (7.4,3.05);
  \draw[blue, thick] (0.4,5.05) -- (7.4,7.10);

  \node[blue] at (7.65,3.05) {$\ell$};
  \node[blue] at (8.05,7.10) {$\ell_i \parallel \ell$};

  \coordinate (a) at (1.75,1.40);
  \coordinate (u) at (3.05,1.78);

  \coordinate (ai1) at (5.35,6.50);
  \coordinate (v) at (6.25,6.76);

  \coordinate (x1) at (4.65,6.05);
  \coordinate (x2) at (3.60,5.70);
  \coordinate (x3) at (2.85,5.05);
  \coordinate (x4) at (2.15,3.65);

  \coordinate (y1) at (3.80,2.90);
  \coordinate (y2) at (4.55,4.20);
  \coordinate (y3) at (5.15,5.10);

  \draw[gray, thick]
    (a) -- (u) -- (y1) -- (y2) -- (y3) -- (v) -- (ai1)
    -- (x1) -- (x2) -- (x3) -- (x4) -- cycle;

  \draw[orange, very thick, ->] (a) -- (x4);
  \draw[orange, very thick, ->] (x4) -- (x3);
  \draw[orange, very thick, ->] (x3) -- (x2);
  \draw[orange, very thick, ->] (x2) -- (x1);
  \draw[orange, very thick, ->] (x1) -- (ai1);

  \coordinate (s1) at (3.00,2.70);
  \coordinate (s2) at (3.80,3.55);
  \coordinate (s3) at (4.45,4.80);

  \draw[orange, very thick, dashed, ->] (a) -- (s1);
  \draw[orange, very thick, dashed, ->] (s1) -- (s2);
  \draw[orange, very thick, dashed, ->] (s2) -- (s3);
  \draw[orange, very thick, dashed, ->] (s3) -- (ai1);

  \draw[red, ultra thick] (a) -- (u);
  \node[red, fill=white, inner sep=1pt] at ($(a)!0.62!(u)+(-0.35,-0.55)$) {$e$};

  \draw[blue, ultra thick] (ai1) -- (v);
  \node[blue, fill=white, inner sep=1pt] at ($(ai1)!0.55!(v)+(-0.1,0.6)$)
    {$\{a_{i+1},\ell_i\}$};

  \fill[red] (a) circle (2.8pt);
  \node[left] at ($(a)+(-0.05,0.15)$) {$a=a_r$};

  \fill[blue] (ai1) circle (2.8pt);
  \node[above left] at (ai1) {$a_{i+1}$};

  \foreach \p in {u,v,x1,x2,x3,x4,y1,y2,y3,s1,s2,s3}{
    \fill[black] (\p) circle (1.8pt);
  }

  \node[orange, fill=white, draw=orange!60, rounded corners=1pt,
        inner sep=3pt, align=center]
    at (0.75,4.05)
    {chosen $\ell$-transverse\\ arc of $C$};

  \node[orange, fill=white, draw=orange!60, rounded corners=1pt,
        inner sep=3pt]
    at (5.00,3.70)
    {straightened path};

  \node[align=center] at (4.00,0.05)
    {short cycle \(C\) through \(e\)};

\end{tikzpicture}
\end{center}
\caption{Geometric view of the cycle-hitting argument. The solid orange path is the chosen \(\ell\)-transverse arc of the cycle, and the dashed orange path is the straightened path with the same endpoints.}
\end{figure}

\begin{lemma}
\label{lem:cycle-hit}
Every cycle $C$ in \(\Bk\) of length at most \(2k\) that contains an edge \(e=\edge{a}{\ell}\) also contains an edge of \(F_e\).
\end{lemma}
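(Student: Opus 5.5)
The plan is to walk around $C$ starting from $a$, in the direction that does not immediately use $e$, and stop at the first line parallel to $\ell$. The arc travelled so far is an $\ell$-transverse path, and path straightening compresses it to line-length $\le \frac{k}{2}-1$. That puts the edge where the arc enters the parallel line into $F_e$, and this edge lies on $C$.

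\textbf{Step 1 (setup and the counting fact).} Write the cycle as
$$C=(a=a_0,m_0,a_1,m_1,\dots,a_{r-1},m_{r-1}=\ell,a_r=a),$$
so the last line visited is $\ell$ and $r\le k$. The proof of Lemma~\ref{lem:no-unique-slope} gives more than its statement. Since any $k$ slope vectors are linearly independent, the identity $\sum_j \lambda_j\sigma_j=0$ with all $\lambda_j\neq 0$ forces every slope vector occurring in $C$ to occur on at least two of its lines. I would record this strengthened form, noting that $C$ is a simple cycle and hence non-backtracking. Two consequences follow:
\begin{itemize}
\item $C$ contains some line $m_j\neq\ell$ with the same slope as $\ell$. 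It is genuinely distinct from $\ell$ because $C$ is simple, so $m_j\parallel\ell$.
\item The number of distinct slopes appearing in $C$ is at most $r/2\le k/2$.
\end{itemize}

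\textbf{Step 2 (choose the arc).} Let $i$ be the smallest index with $m_i\parallel\ell$; it exists by Step 1 and satisfies $i<r-1$. Consider the subpath
$$\pi=(a_0,m_0,a_1,\dots,m_{i-1},a_i).$$
It is simple because it is part of a simple cycle. None of $m_0,\dots,m_{i-1}$ equals $\ell$ (simplicity) or is parallel to $\ell$ (minimality of $i$). Hence $\pi$ is an $\ell$-transverse path from $a$ to $a_i\in m_i$. If $i=0$, then $\pi$ is the trivial path of line-length $0$.

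\textbf{Step 3 (bound the slopes of the arc).} Every line of $\pi$ has a slope occurring in $C$, and none has the slope of $\ell$. The slope of $\ell$ is itself among the at most $k/2$ slopes of $C$. So $\pi$ uses at most $\frac{k}{2}-1$ distinct slope vectors.

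\textbf{Step 4 (straighten and conclude).} Lemma~\ref{lem:path-straightening} now produces an $\ell$-transverse path $a\leadsto a_i$ of line-length $\le \frac{k}{2}-1$. Since $m_i\parallel\ell$ and $a_i\in m_i$, Definition~\ref{def:fe} puts $\{a_i,m_i\}$ into $F_e$. This edge is an edge of $C$, which proves the lemma.

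\textbf{Main obstacle.} The delicate step is Step 1. The lemma as stated only forbids a slope occurring exactly once, whereas I need that every slope occurs at least twice in order to bound the number of distinct slopes by $k/2$. I would re-derive this directly from the Vandermonde argument in the proof of Lemma~\ref{lem:no-unique-slope}. The point to check is that the distinct slopes of $C$ number at most $r\le k$, so their slope vectors are linearly independent; grouping equal slopes then forces each group's nonzero coefficients to sum to $0$, which a single nonzero coefficient cannot do.
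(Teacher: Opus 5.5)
Your proof is correct and is essentially the paper's argument. The paper orients the cycle the other way and takes the parallel line $\ell_i$ with $i$ maximal, which is exactly the first parallel line you meet walking from $a$ away from $e$; it then bounds the number of slopes on that arc by $\frac{r}{2}-1$ and applies Lemma~\ref{lem:path-straightening}, just as you do.

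Your ``main obstacle'' is not a real obstacle. For a slope that occurs in a simple cycle, ``not exactly one line of that slope'' already means ``at least two,'' so the statement of Lemma~\ref{lem:no-unique-slope} gives your counting fact directly.
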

\begin{proof}
Let the vertex sequence of $C$ be
$$
C=(a,\ell,a_1,\ell_1,a_2,\ell_2,\dots,a_{r-1},\ell_{r-1},a_r=a)
$$
with $r \le k$.
By Lemma~\ref{lem:no-unique-slope}, at least one of the other lines in $C$ is parallel to \(\ell\).
Let $\ell_i$ be such a line, with the index $i$ selected as large as possible.
Our goal will now be to argue that $\edge{a_{i+1}}{\ell_i}\in F_e$.

There are $r$ distinct lines in $C$.
By Lemma \ref{lem:no-unique-slope}, each slope vector that is used by any of these $r$ lines is used by at least two of them, so there are at most $\frac{r}{2}$ distinct slope vectors used by lines in $C$.
Now consider the subpath $\pi \subseteq C$ with vertex sequence
$$
\pi=(a_r,\ell_{r-1},a_{r-1},\dots,\ell_{i+1},a_{i+1}).
$$
By maximal choice of $i$, this is an $\ell$-transverse path, and hence does not contain any lines whose slope vector matches $\ell$.
So there are at most $\frac{r}{2}-1$ distinct slope vectors used by the lines in $\pi$.
Thus, by Lemma \ref{lem:path-straightening}, there is an $\ell$-transverse path $\pi'$ with the same endpoints $(a_r, a_{i+1})$ that has line-length at most $\frac{r}{2}-1 \le \frac{k}{2}-1$.
Thus, by construction, the set $F_e$ associated to $e=\{a=a_r, \ell\}$ will include the edge $(a_{i+1}, \ell_i) \in C$.
\end{proof}

\subsection{Proof Wrapup \label{sec:cloudblowup}}

The following theorem certifies our lower bound for the special case $f=1$:

\begin{theorem} \label{thm:onedft}
\label{thm:main}
The only \(1\)-DFT \((2k-1)\)-spanner of \(\Bk\) is \(\Bk\) itself.
\end{theorem}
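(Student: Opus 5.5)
We argue by contradiction. Suppose $H \subsetneq \Bk$ is a $1$-DFT $(2k-1)$-spanner of $\Bk$, and pick an edge $e = \edge{a}{\ell} \in E(\Bk) \setminus E(H)$. The plan is to use the fault set $F_e$ from Definition~\ref{def:fe} as a single failure event that witnesses the spanner condition being violated for the pair $(a,\ell)$. Lemma~\ref{lem:Fe-degree1} gives that $F_e$ has maximum degree $1$, so it is a legal fault set in the $1$-DFT model. Also $e \notin F_e$ by definition, so $e$ survives in $\Bk \setminus F_e$ and $\dist_{\Bk \setminus F_e}(a,\ell) = 1$. The spanner requirement then demands $\dist_{H \setminus F_e}(a,\ell) \le 2k-1$.

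Next we show this is impossible. Suppose $H \setminus F_e$ contains an $a \leadsto \ell$ path $P$ of length at most $2k-1$. Since $e \notin H$, $P$ does not use $e$. Because $\Bk$ is bipartite with $a$ and $\ell$ on opposite sides, $P$ has odd length; we may take it simple. Closing $P$ with $e$ gives a simple cycle $C$ in $\Bk$ through $e$ of length $|P| + 1 \le 2k$. By Lemma~\ref{lem:cycle-hit}, $C$ contains an edge of $F_e$. That edge is not $e$, since $e \notin F_e$, so it lies on $P$. This contradicts $P \subseteq H \setminus F_e$. Hence $\dist_{H\setminus F_e}(a,\ell) > 2k-1 = (2k-1)\cdot\dist_{\Bk\setminus F_e}(a,\ell)$, so $H$ is not a $1$-DFT $(2k-1)$-spanner. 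Thus every edge of $\Bk$ must belong to $H$.

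The main point needing care is checking that the edge weights are unit, so that distances are hop counts. We also need to check that the cycle formed from $P$ and $e$ is genuinely a cycle of length $\le 2k$ containing $e$. This holds because $P$ is simple and avoids $e$, so $P$ does not revisit $a$ or $\ell$ except at its endpoints, and appending $e$ closes it into a simple cycle. With these points settled, the conclusion follows directly from Lemmas~\ref{lem:Fe-degree1} and~\ref{lem:cycle-hit}, which carry the substantive work.
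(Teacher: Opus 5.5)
Your proposal is correct and follows essentially the same route as the paper: take a missing edge $e$, use $F_e$ as the fault set (legal by Lemma~\ref{lem:Fe-degree1}, with $e \notin F_e$), and use Lemma~\ref{lem:cycle-hit} to rule out any $a \leadsto \ell$ path of length $\le 2k-1$ in $H \setminus F_e$. Your phrasing, which supposes a short path exists rather than taking a shortest one, also covers the case where $a$ and $\ell$ are disconnected in $H \setminus F_e$; the paper's proof handles that case only implicitly.
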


\begin{proof}
Let \(H\) be a proper subgraph of \(\Bk\), and we will show that \(H\) is not a \(1\)-DFT \((2k-1)\)-spanner of \(\Bk\).
Consider an edge
$$
e=\edge{u}{v}\in E(\Bk)\setminus E(H).
$$

Let \(F_e\) denote the set associated to this edge (as in Definition \ref{def:fe}).
By Lemma~\ref{lem:Fe-degree1}, the set \(F_e\) has maximum degree \(1\).
Next, let $\pi$ be a shortest $u \leadsto v$ path in the graph $H \setminus F_e$.
Since $e \notin E(H)$, we have $e \notin \pi$, and so $\pi \cup \{e\}$ is a cycle in $\Bk$.
This cycle contains $e$ but it does not contain any edge of $F_e$.
By Lemma~\ref{lem:cycle-hit}, this means $\pi \cup \{e\}$ contains $>2k$ edges.
Thus $\pi$ contains $>2k-1$ edges, so we have
$$
|E(\pi)| = \dist_{H\setminus F_e}(u,v) > 2k-1.
$$
On the other hand, \(e\notin F_e\), so we have
$
\dist_{\Bk\setminus F_e}(u,v)=1.
$
This implies that \(H\) is not a \(1\)-DFT \((2k-1)\)-spanner of \(\Bk\).
\end{proof}

We now upgrade the previous lower bound to an arbitrary degree parameter \(f\) using a standard cloud-blowup technique, which has appeared in many previous papers on fault tolerance \cite{PST25, BHP24, BDPV18}.

\begin{figure}[t]
\begin{center}
\begin{tikzpicture}[scale=1.05, every node/.style={font=\small}]

  \coordinate (u) at (0,0);
  \coordinate (v) at (2.2,0);

  \fill (u) circle (2.5pt);
  \fill (v) circle (2.5pt);

  \node[below] at (u) {$u$};
  \node[below] at (v) {$v$};

  \draw[thick] (u) -- node[above] {$e=\{u,v\}$} (v);

  \node[align=center] at (1.1,-0.9)
    {base edge in \(\Bk\)};

  \draw[->, thick] (3.0,0) -- (4.1,0);
  \node[above] at (3.55,0.08) {blowup};

  \foreach \i/\y in {1/1.0,2/0.35,3/-0.35,4/-1.0}{
    \coordinate (u\i) at (5.0,\y);
    \coordinate (v\i) at (7.2,\y);
    \fill (u\i) circle (2.2pt);
    \fill (v\i) circle (2.2pt);
  }

  \node[above] at (5.0,1.25) {$C(u)$};
  \node[above] at (7.2,1.25) {$C(v)$};

  \foreach \i in {1,2,3,4}{
    \foreach \j in {1,2,3,4}{
      \draw[gray!60, thin] (u\i) -- (v\j);
    }
  }

  \draw[red, ultra thick] (u2) -- (v3);
  \node[red, fill=white, inner sep=1pt]
    at ($(u2)!0.55!(v3)+(0,0.22)$)
    {$\widehat e$};

  \node[align=center] at (6.1,-1.55)
    {\(\Lift(e)\) is the complete bipartite graph\\
    between \(C(u)\) and \(C(v)\)};

\end{tikzpicture}
\end{center}
\caption{A lifted edge in the cloud blowup. Each base vertex is replaced by a cloud of \(f\) copies, and each base edge \(e=\{u,v\}\) is replaced by the complete bipartite graph \(\Lift(e)\) between \(C(u)\) and \(C(v)\).}
\end{figure}

\begin{definition}[Cloud blowup]
For an integer \(f \ge 1\), the $f$-cloud blowup of the base graph is
$$\Bk^{\langle f \rangle} := \Bk \times [f],$$
where $[f]$ denotes the edgeless graph on $f$ vertices, and $\times$ denotes the $f$-cloud blowup operation defined below.
In other words, it is the graph obtained from $\Bk$ by the following steps:
\begin{itemize}
    \item For each vertex \(x\in V(\Bk)\), create a ``cloud'' of $f$ nodes, $C(x):=\{(x,i): i\in [f]\}$.
    
    \item For each base edge \(\edge{x}{y}\in E(\Bk)\), lift it to a complete bipartite graph between the clouds \(C(x)\) and \(C(y)\).
    That is,
    \[
    \Lift(\edge{x}{y}):=\left\{\edge{(x,i)}{(y,j)} \mid i,j\in [f]\right\} \quad \text{and} \quad E\big(\Bk^{\langle f\rangle}\big)
    := \bigcup \limits_{\edge{x}{y} \in E(\Bk)} \Lift(\edge{x}{y}).
    \]
\end{itemize}
\end{definition}

The following lemma computes how the cloud blowup affects the parameters of the graph:

\begin{lemma}[Size of Blown-Up Graph]
Letting $2n$ be the number of vertices in $\Bk$, the corresponding blown-up graph $\Bk^{\langle f \rangle}$ has $N := 2nf$ vertices and $\Theta(f^{1-1/k}N^{1+1/k})$ edges.
\end{lemma}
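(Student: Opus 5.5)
The plan is direct counting, using Lemma~\ref{lem:bksize} for the parameters of the base graph. There are two steps: count vertices, then count edges and rewrite the edge count in terms of $N$.

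For the vertex count, each of the $2n$ vertices of $\Bk$ is replaced by a cloud of exactly $f$ vertices, and distinct base vertices give disjoint clouds. Hence $|V(\Bk^{\langle f\rangle})| = 2nf = N$.

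For the edge count, each base edge $\{x,y\}$ lifts to the complete bipartite graph $\Lift(\{x,y\})$. This graph has exactly $f^2$ edges. The lifts of distinct base edges are disjoint: an edge $\{(x,i),(y,j)\}$ determines its base edge $\{x,y\}$, because $x \ne y$ (since $\Bk$ is bipartite, hence loopless). Therefore
$$|E(\Bk^{\langle f\rangle})| = f^2 \cdot n^{1+1/k}.$$

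It remains to express this in terms of $N$. Substitute $n = N/(2f)$:
$$f^2 \left(\frac{N}{2f}\right)^{1+1/k} = 2^{-(1+1/k)} \cdot f^{2-1-1/k} \cdot N^{1+1/k} = 2^{-(1+1/k)} \cdot f^{1-1/k} N^{1+1/k}.$$
Since $1 \le 1+1/k \le 2$ for every $k \ge 1$, the constant satisfies $1/4 \le 2^{-(1+1/k)} \le 1/2$. This gives $\Theta(f^{1-1/k} N^{1+1/k})$, with constants independent of $k$, $f$ and $p$.

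There is no real obstacle here. The only points needing care are the disjointness of the lifts, which makes the edge count exact, and the exponent arithmetic in the substitution $n = N/(2f)$.
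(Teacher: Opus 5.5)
Your proposal is correct and follows the paper's proof: count $f$ vertices per cloud and $f^2$ edges per lifted base edge, apply Lemma~\ref{lem:bksize}, and rewrite $f^2 n^{1+1/k}$ in terms of $N = 2nf$. You also check that the lifts of distinct base edges are edge-disjoint and compute the explicit constant $2^{-(1+1/k)} \in [1/4, 1/2]$, both of which the paper leaves implicit inside its $\Theta(\cdot)$.
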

\begin{proof}
Each node in $\Bk$ corresponds to \(f\) copies in $\Bk^{\langle f \rangle}$, and each edge in $\Bk$ corresponds to $f^2$ edges in $\Bk^{\langle f \rangle}$.
Thus we have
\begin{align*}
\left|E\left(\Bk^{\langle f \rangle}\right)\right| &= f^2 \cdot \left|E\left(\Bk\right)\right|\\
&= \Theta\left( f^2 n^{1+1/k} \right) \tag*{Lemma \ref{lem:bksize}}\\
&= \Theta\left( f^{1-1/k} (nf)^{1+1/k}\right)\\
&= \Theta\left( f^{1-1/k} N^{1+1/k}\right). \tag*{\qedhere}
\end{align*}
\end{proof}


From the previous lemma, it suffices to prove that $\Bk^{\langle f \rangle}$ is the only $f$-DFT $(2k-1)$-spanner of itself.
To do so, let us first define lifted fault sets:
\begin{definition} [Lifted Fault Sets $\widehat{F}_e$]
For an edge $e \in E(\Bk)$ and an edge $\widehat{e} \in \Lift(e)$, we define $\widehat{F}_{\widehat{e}}$ to be the set of all lifted edges from $F_e \cup \{e\}$, except for $\widehat{e}$ itself.
That is:
$$
\widehat F_{\widehat e}
:=
\left(\bigcup_{g \in F_e} \Lift(g)\right)
\cup
\big(\Lift(e) \setminus \{\widehat e\}\big).
$$
\end{definition}

The next lemma shows that the lifted fault sets have degree $\le f$:

\begin{lemma}[Lifted fault set has maximum degree \(f\)]
For any edge $\widehat{e}$, the corresponding edge set \(\widehat F_{\widehat e}\) has maximum degree \(\le f\) in \(\Bk^{\langle f\rangle}\).
\end{lemma}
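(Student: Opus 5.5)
The plan is to reduce the degree bound in the blown-up graph to the degree bound in the base graph from Lemma~\ref{lem:Fe-degree1}. Write $e = \{x, y\}$ for the base edge under $\widehat e$. Fix an arbitrary vertex $(w,i) \in V(\Bk^{\langle f\rangle})$, with $w \in V(\Bk)$. Every edge of $\widehat F_{\widehat e}$ lies in $\Lift(g)$ for some $g \in F_e \cup \{e\}$. An edge of $\Lift(g)$ is incident to $(w,i)$ only if $w$ is an endpoint of $g$. So the edges of $\widehat F_{\widehat e}$ at $(w,i)$ all come from base edges $g \in F_e \cup \{e\}$ that are incident to $w$.

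\textbf{Step 1.} By Lemma~\ref{lem:Fe-degree1}, $F_e \cup \{e\}$ is a matching in $\Bk$. Hence at most one base edge $g \in F_e \cup \{e\}$ is incident to $w$. If there is none, then $\deg_{\widehat F_{\widehat e}}((w,i)) = 0$.

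\textbf{Step 2.} Suppose exactly one such $g$ exists, say $g = \{w, w'\}$. The edges of $\Lift(g)$ incident to $(w,i)$ are exactly $\{(w,i),(w',j)\}$ for $j \in [f]$. There are exactly $f$ of them, since $g$ is an edge of a bipartite graph, so $w \neq w'$. Therefore $\deg_{\widehat F_{\widehat e}}((w,i)) \le f$.

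In the sub-case $g = e$, only $\Lift(e) \setminus \{\widehat e\}$ is included, so the degree is either $f$ or $f-1$. Either way it is at most $f$.

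The only point needing care is the claim that distinct lifts of distinct base edges cannot both contribute at a single cloud vertex. That claim is exactly the matching property already proved in Lemma~\ref{lem:Fe-degree1}, applied to $F_e \cup \{e\}$ rather than to $F_e$ alone. This is why that lemma was stated with $e$ included, and it is the step I would check most carefully. Everything else is direct counting from the definition of the cloud blowup.
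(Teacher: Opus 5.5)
Your proof is correct and takes the same approach as the paper: you use the matching property of $F_e \cup \{e\}$ from Lemma~\ref{lem:Fe-degree1} to get that each cloud vertex meets at most one lifted biclique, and then note that it has at most $f$ edges inside that biclique. The extra details you spell out (that $w \neq w'$, and the $f-1$ count in the sub-case $g = e$) are correct and only make explicit what the paper's two-line argument leaves implicit.
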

\begin{proof}
Since \(F_e \cup \{e\}\) has maximum degree \(1\) in the base graph, each base vertex \(x\in V(\Bk)\) is incident to at most one edge of \(F_e \cup \{e\}\).
Hence each lifted vertex in $\Bk^{\langle f \rangle}$ is incident to at most one lifted biclique, and within that biclique it is incident to at most \(f\) edges. \qedhere

\end{proof}

We are finally ready to wrap up the proof:

\begin{theorem}[Cloud-blowup lower bound]
\label{thm:cloud-blowup}
For any positive integer $f$, the only \(f\)-DFT \((2k-1)\)-spanner of \(\Bk^{\langle f \rangle}\) is \(\Bk^{\langle f \rangle}\) itself.
\end{theorem}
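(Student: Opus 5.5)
The proof will mirror Theorem~\ref{thm:onedft}, with the lifted fault sets $\widehat F_{\widehat e}$ in place of $F_e$. Let $H$ be a proper subgraph of $\Bk^{\langle f\rangle}$, and fix an edge $\widehat e = \edge{(u,i)}{(v,j)} \in E(\Bk^{\langle f\rangle}) \setminus E(H)$. Its base edge is $e=\edge{u}{v} \in E(\Bk)$. By the preceding lemma, $\widehat F_{\widehat e}$ has maximum degree $\le f$, so it is a legal fault set. Since $\widehat e \notin \widehat F_{\widehat e}$, we have $\dist_{\Bk^{\langle f\rangle}\setminus \widehat F_{\widehat e}}((u,i),(v,j)) = 1$. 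It therefore suffices to show that every $(u,i)\leadsto(v,j)$ path $\widehat\pi$ in $H\setminus \widehat F_{\widehat e}$ has more than $2k-1$ edges.

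The key step is to project $\widehat\pi$ down to $\Bk$. Each edge of $\widehat\pi$ lies in $\Lift(g)$ for a unique base edge $g$, and consecutive edges share a vertex whose cloud is shared. So the projection $\pi$, obtained by replacing each lifted vertex $(x,\cdot)$ by $x$, is a $u\leadsto v$ walk in $\Bk$ with $|\pi| = |\widehat\pi|$. Two facts about $\pi$ follow from how the fault set was built.
\begin{itemize}
\item $\pi$ avoids $F_e$, because all lifts of edges of $F_e$ lie in $\widehat F_{\widehat e}$.
\item $\pi$ avoids $e$. The only lift of $e$ outside $\widehat F_{\widehat e}$ is $\widehat e$ itself, and $\widehat e\notin E(H)$.
\end{itemize}

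Next we extract a cycle. Since $\pi$ is a $u\leadsto v$ walk in $\Bk \setminus (F_e\cup\{e\})$, it contains a simple $u\leadsto v$ path $\pi_0$ with $|\pi_0|\le|\pi|$, and $\pi_0$ also avoids $e$ and $F_e$. Then $\pi_0\cup\{e\}$ is a cycle in $\Bk$ through $e$ that contains no edge of $F_e$. By Lemma~\ref{lem:cycle-hit}, this cycle has more than $2k$ edges. Hence $|\widehat\pi| = |\pi| \ge |\pi_0| > 2k-1$, so the distance in $H\setminus\widehat F_{\widehat e}$ exceeds $2k-1$ times the distance in $\Bk^{\langle f\rangle}\setminus\widehat F_{\widehat e}$. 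This shows $H$ is not an $f$-DFT $(2k-1)$-spanner.

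The only delicate point is the projection. A simple path in the blowup can project to a non-simple walk in $\Bk$, for example by revisiting a cloud at a different copy. This is why we shortcut the walk to a simple path before invoking Lemma~\ref{lem:cycle-hit}, and we note that shortcutting never reintroduces $e$ or edges of $F_e$. The degenerate case $u = v$ cannot occur, because $\Bk$ has no loops.
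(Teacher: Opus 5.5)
Your proof is correct and follows the paper's argument. You project the path in $H\setminus\widehat F_{\widehat e}$ down to $\Bk$, extract a cycle through $e$ that avoids $F_e$, and invoke Lemma~\ref{lem:cycle-hit}; your explicit check that the projected walk never uses $e$ (because $\Lift(e)\setminus\{\widehat e\}$ is faulted and $\widehat e\notin E(H)$) is precisely what justifies the paper's terser step of taking a minimal closed subwalk through $e$.
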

\begin{proof}
Since the following cloud-lift is standard in the previous literature, we will sketch parts of the proof.
Let \(H\) be a proper subgraph of \(\Bk^{\langle f \rangle}\). Choose a missing lifted edge
$$
\widehat e=\edge{(u,i)}{(v,j)} \in E\left(\Bk^{\langle f \rangle}\right) \setminus E(H),
$$
and let \(e=\edge{u}{v}\in E(\Bk)\) be the corresponding base edge; that is, $\widehat{e} \in \Lift(e)$.
Consider the associated fault set \(\widehat F_{\widehat e}\).
By the previous lemma, \(\widehat F_{\widehat e}\) has maximum degree at most \(f\).
Next, let $\pi$ be a shortest $(u, i) \leadsto (v, j)$ path in the graph $H \setminus \widehat{F}_{\widehat{e}}$.
Projecting $\pi \cup \{\widehat e\}$ to the base graph yields a closed walk in $\Bk$ containing $e$ and avoiding $F_e$. Choose a closed subwalk of minimum positive length that contains $e$; this is a cycle, still contains $e$, and still avoids $F_e$. By Lemma~\ref{lem:cycle-hit}, this cycle has length greater than $2k$. Hence $\pi$ has length greater than $2k-1$, and so
$$|E(\pi)| = \dist_{H \setminus \widehat{F}_{\widehat{e}}}((u,i),(v,j)) > 2k-1.$$
On the other hand, $\widehat{e} \notin \widehat{F}_{\widehat{e}}$, so we have
$\dist_{\Bk^{\langle f \rangle} \setminus \widehat{F}_{\widehat{e}}}((u,i),(v,j)) = 1.$
This implies that $H$ is not an $f$-DFT $(2k-1)$-spanner of $\Bk^{\langle f \rangle}$. \qedhere

\end{proof}



\bibliography{refs}

\end{document}